\title{Learning functional programs with function invention and reuse
    }   
\author{Andrei Diaconu}             
\begin{document}

\baselineskip=18pt plus1pt

\setcounter{secnumdepth}{3}
\setcounter{tocdepth}{3}

\maketitle                  
\begin{abstract}
\par Inductive programming (IP) is a field whose main goal is synthesising programs that respect a set of examples, given some form of background knowledge. This paper is concerned with a subfield of IP, inductive functional programming (IFP). We explore the idea of generating modular functional programs, and how those allow for function reuse, with the aim to reduce the size of the programs. We introduce two algorithms that attempt to solve the problem and explore type based pruning techniques in the context of modular programs. By experimenting with the implementation of one of those algorithms, we show reuse is important (if not crucial) for a variety of problems and distinguished two broad classes of programs that will generally benefit from function reuse.
\end{abstract}

\tableofcontents            

\newtheoremstyle{indented}
  {\topsep}{\topsep}%
  {}{}%
  {\bfseries}{}%
  {\newline}{}%
\theoremstyle{indented}
\newtheorem{exam}{Example}[chapter]

\chapter{Introduction}

\section{Inductive programming}

\indent Inductive programming (IP) \cite{gulwanietal} - also known as program synthesis or example based learning - is a field that lies at the intersection of several computer science topics (machine learning, artificial intelligence, algorithm design) and is a form of automatic programming. IP, as opposed to deductive programming \cite{deductive}  (another automatic programming approach, where one starts with a full specification of the target program) tackles the problem starting with an incomplete specification and tries to generalize that into a program. Usually, that incomplete specification is represented by examples, so we can informally define inductive programming to be the process of creating programs from examples using a limited amount of background information - we shall call this process the program synthesis problem \cite{shapiro}. We give an example of what an IP system might produce, given a task:

\begin{exam}
\textit{Input}: The definitions of \textit{map} and \textit{increment} and the examples $f([1,2,3]) = [2,3,4]$ and $f([5,6]) = [6,7])$.
\\
\textit{Output}: The definition $f = map$ $increment$.
\end{exam}

One of the key challenges of IP (and what makes it attractive) is the need to learn from small numbers of training examples, which mostly rules out statistical machine learning approaches, such as SVMS and neural networks. This can clearly create problems: if the examples are not representative enough, we might not get the program we expect.
\par As noted in the survey by Gulwani et al \cite{gulwanietal}, one of the main areas of research in IP has been end-user programming. More often than not, an application will be used by a non-programmer, and hence that user will probably not be able to write scripts that make interacting with that application easier. IP tries to offer a solution to that problem: the user could supply a (small) amount of information, such as a list of examples that describe the task, and an IP system could generate a small script that automates the task. Perhaps one of the most noteworthy applications of this idea is in the \textit{MS Excel} plug-in \textit{Flash Fill} \cite{gulwani2012spreadsheet}. Its task is to induce a program that generalizes some spreadsheet related operation, while only being given a few examples - usage of \textit{Flash Fill} can be seen in figure 1.1.

\begin{figure}
\begin{center}
\includegraphics{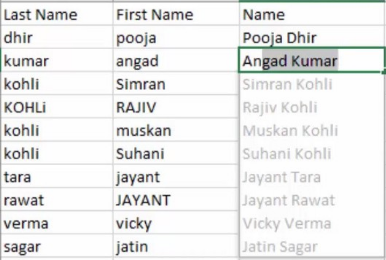}
\end{center}
\caption{Flash fill in action}
\end{figure}

\section{Motivation}

Two main areas of research in IP are inductive functional programming (IFP, which we will focus on in this paper) and inductive logic programming (ILP). The idea of function invention in the IFP context is not new, and indeed some systems use it, such as \textit{IGOR 2} and \textit{$\lambda^2$}. In informal terms, function invention mimics the way humans write programs: instead of writing a long one-line program, we break the bigger program into auxiliary functions that can be used to build a modular (equivalent) program.
\par In this context, we have asked the question of whether another program writing technique could be useful for inductive programming: reusing the functions that we have already invented. By reuse we mean that once a function has been invented, it can then be used in the definition of another function. While some ILP systems have explored the idea of reusing functions (such as \textit{Metagol} and \textit{Hexmil} \cite{metagol} and to a lesser extent \textit{DILP} \cite{dilp} and \textit{ILASP} \cite{ilasp}), function reuse and its benefits (if any) have not really been explored in the IFP context, as noted by Cropper \cite{cropperth}. When investigating the existing systems with invention capabilities, we have observed that the way the invention process is conducted makes reuse practically impossible. Moreover, even though predicate invention and reuse have been claimed as useful (at least in the ILP context \cite{turns20}), to our knowledge there has been no work that empirically demonstrates that it is, nor any work discussing when it may be useful.
To address those limitations, in this work we are interesting in the following research questions:

\begin{itemize}
\item[\textbf{Q1}] Can function reuse improve learning performance (find programs faster)?
\item[\textbf{Q2}] What impact does modularity have on pruning techniques, especially type based ones?
\item[\textbf{Q3}] What impact does the grammar of the synthesized programs have on function reuse?
\item[\textbf{Q4}] What classes of problems benefit from it; that is, can we describe the kinds of programs where function reuse is useful?
\end{itemize}

\section{Contributions}
\indent \indent In this paper, we make the following contributions:
\begin{itemize}
\item We provide a formal framework to describe IFP approaches that solve the program synthesis problem by creating modular programs and that can exploit function reuse.
\item Given this formal framework, we create two algorithms that solve the synthesis problem. One of them uses type based pruning to speed up the searching process, but uses a restrictive grammar; we have proven that for general grammars, this algorithm (which uses a ``natural" type inference based pruning approach) loses completeness (which in particular greatly hinders function reuse). The second algorithm does not use type based pruning and works with general grammars, but we propose a way in which this might be achieved.
\item Our experimental work has provided positive results, which shed light on the usefulness of reuse in the IFP context; for example, we have shown that reuse can decrease the size of the synthesized programs and hence reduce the overall computation time (in some cases dramatically). Through experimentation, we have also distinguished two classes of problems for which reuse is important: AI planning problems and problems concerned with nested data structures (we have focused on lists).
\end{itemize}

\section{Structure of the report}
\indent \indent The rest of the paper is structured as follows:
\begin{itemize}
\item \textbf{chapter 2}: Presents background on inductive programming, function invention and reuse, and a variety of other systems.
\item \textbf{chapter 3}: Presents a formal framework for describing the program synthesis problem and formalizes function reuse.
\item \textbf{chapter 4}: Presents two algorithms that attempt to solve the program synthesis problem, in light of the description presented in chapter 3.
\item \textbf{chapter 5}: Explores the role of function reuse through experimentation and contains a variety of experiments that validate our hypothesis; we also explore the various use cases of function reuse.
\item \textbf{chapter 6}: Presents the conclusions, limitations, possible extensions of the project.
\end{itemize}

\chapter{Background and related work}

\indent In the previous chapter, we have informally introduced the concept of inductive programming (IP), presented its relevance and showcased our ideas. In this chapter, we first provide the reader with more background on IP (areas of research, approaches) and then switch to literature review, showing different IP systems and their relevance to ours. We finish the chapter by talking about the idea of function invention and reuse.

\section{Background on IP}
IP has been around for almost half a century, with a lot of systems trying to tackle the problem of finding programs from examples. It is a subject that is placed at the crossroad between cognitive sciences, artificial intelligence, algorithm design and software development \cite{kitzelmannsurvey}. An interesting fact to note is that IP is a machine learning problem (learning from data) and moreover, in recent years it has gained attention because of the inherent transparency of its approach to learning, as opposed to the black box nature of statistical/neuronal approaches, as noted by Schmid \cite{SchmidInductivePA}.
\par IP has two main research areas, as noted by Gulwani \textit{et al.} \cite{gulwanietal}:
\begin{itemize}
\item Inductive functional programming (IFP): IFP focuses on the synthesis of functional programs, typically used to create programs that manipulate data structures.
\item Inductive logical programming (ILP): ILP started as research on induction in a logical context \cite{gulwanietal}, generally used for learning AI tasks. It's aim is to construct a hypothesis (logic programs) \textit{h} which explain examples \textit{E} in terms of some background knowledge \textit{B} \cite{MUGGLETON1999283}.
\end{itemize}
\par As highlighted in the review by Kitzelmann \cite{kitzelmannsurvey}, there have been two main approaches to inductive programming (for both IFP and ILP):
\begin{itemize}
\item \textbf{analytical approach}: Its aim is to exploit features in the input-output examples; the first systematic attempt was done by Summers' \textit{THESIS} \cite{thesis} system in 1977. He observed that using a few basic primitives and a fixed program grammar, a restricted class of recursive LISP programs that satisfy a set of input-output examples can be induced. Because of the inherent restrictiveness of the primitives, the analytical approach saw little innovation in the following decade, but systems like \textit{IGOR1}, \textit{IGOR2} \cite{igor2} have built on Summers' work. The analytical approach is also found in ILP, a well known example being Muggleton's \textit{Progol} \cite{progol}.
\item \textbf{generate-and-test approach (GAT)}: In GAT, examples are not used to actually construct the programs, but rather to test streams of possible programs, selected on some criteria from the program space. Compared to the analytical approach, GAT tends to be the more expressive approach, at the cost of higher computational time. Indeed, the \textit{ADATE} system, a GAT system that uses genetic programming techniques to create programs, is one of the most powerful IP system with regards to expressivity \cite{kitzelmannsurvey}. Another well known GAT system is Katayama's \textit{Magic Haskeller} \cite{mhask}, which uses type directed search and higher-order functions as background knowledge. Usually, to compensate for the fact that the program space is very big, most GAT systems will include some sort of pruning that discards undesirable programs.
\end{itemize}

\section{Related work}
\par \indent We now present three systems that helped us develop our ideas and contrast them with our work.
\subsection{Metagol}
\textit{Metagol} \cite{metagol} is an ILP system that induces Prolog programs. It uses an idea called MIL, or meta-interpretative learning, to learn logic programs from examples. It uses three forms of background information:
\begin{itemize}
\item compiled background knowledge (CBK): those are small, first order Prolog programs that are deductively proven by the Prolog interpreter.
\item interpreted background knowledge (IBK): this is represented by higher-order formulas that are proven with the aid of a meta-interpreter (since Prolog does not allow clauses with higher-order predicates as variables); for example, we could describe \textit{map/3} using the following two clauses:\\ $map([],[],F) :- $ and $map([A|As],[B|Bs],F) :- F(A,B), map(As,Bs,F)$.
\item metarules: those are rules that enforce the form (grammar) of the induced program's clauses; an example would be $P(a, b) :- Q(a, c), R(c, b)$, where upper case letters are existentially quantified variables (they will be replaced with CBK or IBK).
\end{itemize}
\par The way the hypothesis search works is as follows: try to prove the required atom using CBK; if that fails, fetch a metarule, and try to fill in the existentially quantified variables; continue until a valid hypothesis (one that satisfies the examples) is found. Something to note here is that \textit{Metagol} generates new examples: if we select the \textit{map} metarule, based on the existing examples we can infer a set of derived examples that the functional argument of \textit{map} must satisfy. This technique is used to prune incorrect programs from an early stage. All this process is wrapped in a depth-bounded search, so as to ensure the shortest program is found.
\par Our paper has started as an experiment to see whether ideas from \textit{Metagol} could be transferred to a functional setting; hence, in the next chapters we use similar terminology, especially around metarules and background knowledge. We will also use depth-bounded search in our algorithm, for similar reasons to \textit{Metagol}.

\subsection{Magic Haskeller}
Katayama's \textit{Magic Haskeller} \cite{mhask} is a GAT approach that uses type pruning and exhaustive search over the program space. Katayama argues that type pruning makes the search space manageable. One of the main innovation of the system was the usage of higher-order functions, which speeds up the searching process and helps simplify the output programs (which are chains of function applications). Our system differs in the fact that our programs are modular, which allow for function reuse. One of \textit{Magic Haskeller}'s limitations is the inability to provide user supplied background knowledge. The implementation of our algorithms enable a user to experiment with the background functions in a programmatic manner, and we also make it fairly easy to change the grammar of the programs.
\subsection{$\lambda^{2}$}
\textit{$\lambda^{2}$} \cite{lambdasq} is an IFP system which combines GAT and analytical methods: the search is similar to \textit{Magic Haskeller}, in the way that it uses higher order functions and explores the program space using type pruning, but differs in the fact that programs have a nested form (think of \textit{where} clauses in Haskell) and uses an example propagation pruning method, similar to \textit{Metagol}. However, such an approach does not allow function reuse, since an inner function can't use an ``ancestor" function in its definition (possible infinite loop). Our paper tries to address this, exploring the possibility of creating non-nested programs and hence allowing function reuse.

\section{Invention and reuse}

Generally, most IP approaches tend to disregard the extra knowledge found during the synthesis process as another form of background knowledge. In fact, systems like \textit{$\lambda^{2}$} and \textit{Magic Haskeller} make this impossible because of how the search is conducted. Some systems, like \textit{Igor 2} do have a limited form of it, but it is very restrictive and does not allow function reuse in a general sense. This usually stems from what grammars the induced programs use. One of our main interests has been the usefulness of function reuse by allowing a modular (through function invention) way of generating programs (that is, we create ``standalone" functions that can then be pieced together like a puzzle). For example, consider the \textit{drop lasts} problem: given a non-empty list of lists, remove the last element of the outer list as well as the last elements of all the inner ones. Example 2.1 shows a possible program that was synthesized using only invention. However, if function reuse is enabled, example 2.2 shows how we can synthesize a simpler program, which we would expect to reduce the searching time.

\begin{exam}[\textit{droplasts} - only invention]
\begin{lstlisting}[language=Haskell]
target = f1.f2
f1 = f2.f4
f2 = reverse.f3
f3 = map reverse
f4 = tail.f5
f5 = map tail
\end{lstlisting}
\end{exam}

\begin{exam}[\textit{droplasts} - invention + reuse]
Note how $f1$ is reused to create a shorter program.
\begin{lstlisting}[language=Haskell]
target = f1.f2
f2 = map f1
f1 = reverse.f3
f3 = tail.reverse
\end{lstlisting}
\end{exam}

An interesting questions when considering function reuse is what kind of programs benefit from it, which we explore in chapter 5, but we will now move to formalizing the program synthesis problem.

\theoremstyle{definition}
\newtheorem{defn}{Definition}[chapter]

\theoremstyle{plain}
\newtheorem{claim}{Claim}
\newtheorem{thm}{Theorem}[chapter]

\chapter{Problem description}
Before presenting algorithms that solve the synthesis problem, we need to formalize it. We will assume, for the rest of the chapter, that all definitions will be relative to a target language $\mathcal{L}$, whose syntax will be specified in the next chapter.

\section{Abstract description of the problem}

A program synthesis algorithm's aim is to induce programs with respect to some sort of user provided specification. The synthesis process will create programs which we call \textit{induced programs}, that are composed of a set of functions which we call \textit{induced functions}. For each induced program we will distinguish a function called the target function, which is to be applied to the examples to check whether a candidate program is a solution. Intuitively, the output shall be an induced program whose target function satisfies the provided specification.
\par The provided specification in this paper shall be divided in two parts: background knowledge and input-output examples.

\begin{defn}[Background knowledge (BK)]
We define background knowledge to be the information used during the synthesis process. The BK completely determines the possible forms an induced program can have. There are three types of BK that we consider:
\begin{itemize}
	\item \textit{Background functions}: represents the set of functions provided via an external source. We require those functions to be total so as to not introduce non-termination in an induced program. We use the notation $BK_F$ to refer to this kind of knowledge.
	\item \textit{Invented functions}: represents the set of functions that are invented during the synthesis process; this set grows dynamically during the synthesis process (with each new invented function). We use the notation $BK_I$ to refer to this kind of knowledge.
	\item \textit{Function templates}: a set of lambda calculus-style contexts that describe the possible forms of the induced functions. We use the notation $BK_T$ to refer to it.
\end{itemize}
\end{defn}

\par Let us unpack this definition. We have referred to both $BK_I$ and $BK_F$ to be sets of functions: more precisely, they are sets containing pairs of the form $(f_{name}, f_{body})$: $f_{name}$ represents the identifier (function name) of function $f$, whereas $f_{body}$ corresponds to the body of its definition. When we write $f$, we refer to the pair. Example 3.1 shows an example of two functions that might be part of $BK_F$.

\begin{exam}[Background functions]
		\begin{lstlisting}[language=Haskell]
rev xs = if xs == []
         then []
         else rev (tail xs) ++ [head xs]
addOne x = x + 1
		\end{lstlisting}
\end{exam}

\par Function templates represent the blueprints that we use when defining invented functions. They are contexts in the meta-theory of lambda calculus sense, that represent the application of a higher-order combinator, where the ``holes" are place-holders for the required number of functional inputs for such a combinator. Those place-holders signify that there is missing information in the body of the induced function that needs to be filled with either background functions or synthesized functions. We have chosen those to specify the grammar of the invented functions because higher-order functions let us combine the background and invented functions in complex ways, and provide great flexibility. We note the similarity of our function templates to \textit{metarules} \cite{metarules} and \textit{sketches} \cite{sketches}, which serve similar purposes in the respective systems where they are used. Example 3.2 shows the form of a few such templates. For convenience, we number the ``holes", e.g. $\fbox{i}$, with indices starting from 1.

\begin{exam}[Function templates]
		\indent \indent Conditional templates: \textbf{if} $\fbox{1}$ \textbf{then} $\fbox{2}$ \textbf{else} $\fbox{3}$. \\
		\indent Identity: $\fbox{1}$. \\
		\indent Higher-order templates: $\fbox{1}$ \textbf{.} $\fbox{2}$, \textbf{map} $\fbox{1}$, \textbf{filter} $\fbox{1}$. \\
\end{exam}

\par We say an induced function is \textit{complete} if its body has no holes and all functions mentioned in it have a definition, or \textit{incomplete} otherwise. Similarly, we say an induced program is \textit{complete} if it is composed of complete functions. We give a short example to see how templates and functions interact with each other, which will provide some intuition for the algorithmic approach to the inductive process presented in the next chapter.

\begin{exam}[Derivation]
Suppose we wish to find the complete function $F=$ map $reverse$. The following process involving the BK will take place: we invent a new function $F$, and assign it the \textit{map} template to obtain the definition $F=$ map $\fbox{1}$; we then fill the remaining hole using  \textit{reverse}.
\end{exam}

\par The second part of the specification is represented by examples.
\begin{defn}[Examples]
Examples are user provided input-output pairs that describe the aim of the induced program.
We shall consider two types of examples:
\begin{itemize}
	\item Positive examples: those specify what an induced program should produce;
	\item Negative examples: those specify what an induced program should \emph{not} produce;
\end{itemize}
\end{defn}
\par We use the relation $in \mapsto^{+} out$ to refer to positive examples, and the relation $in \mapsto^{-} out$ to refer to negative ones.
While the positive examples have a clear role in the synthesis process, the negative ones serve a slightly different one: they try to remove ambiguity, which is highlighted in example 3.4. Something to note is that both the positive and the negative examples need to have compatible types, meaning that if a type checker is used, all the inputs would share the same type, and so should the outputs.

\begin{exam}[Examples]
Given the positive examples $[3,2,1] \mapsto^{+} [1,2,3]$ and $[5, 4] \mapsto^{+} [4, 5]$, and the negative examples $[1,3,2] \mapsto^{-} [2,3,1]$ and $[5,4] \mapsto^{-} [5,4]$ then the program we want to induce is likely to be a list sorting algorithm. Note that if we only look at the positive examples, another possible induced program is \textit{reverse}, but the negative example $[1,3,2] \mapsto^{-} [2,3,1]$ removes this possibility.
\end{exam}

\begin{defn}[Satisfiability]
We say an complete induced program $P$ whose target function is $f$ satisfies the relations $\mapsto^{+}$ and $\mapsto^{-}$ if:
\begin{itemize}
\item $\forall (in, out) \in \mapsto^{+} . f(in) = out$
\item $\forall (in, out) \in \mapsto^{-} . f(in) \neq out$
\end{itemize}
\end{defn}

\begin{defn}[Program space]
Assume we are given background knowledge $BK$ and let $\mathcal{T}_{check}$ be a type checking algorithm for $\mathcal{L}$. We define the program space $\mathcal{P}_{BK}$ to be composed of programs written in $\mathcal{L}$ such that:
\begin{itemize}
   \item[1.] the bodies of the induced functions are either function templates (which still have holes in them) or function applications (for completed functions).
   \item[2.] the inputs for the higher-order functions (of the templates) are either functions from $BK_I$ or $BK_F$.
   \item[3.] they are typeable w.r.t. $\mathcal{T}_{check}$.
   \item[4.] they contain no cyclical definitions (guard against non-termination).
\end{itemize}
\end{defn}

Note how the $\mathcal{P}_{BK}$ contains induced programs whose functions could still have unfilled holes. We now describe the solution space, which contains the programs we consider solutions.

\begin{defn}[Solution space]
Given $BK$, $\mapsto^+$ and $\mapsto^-$, we define the solution space $\mathcal{S}_{BK, \rightarrow^+, \rightarrow^-} \subset \mathcal{P}_{BK}$ to be composed of complete programs whose target functions satisfy both $\rightarrow^+$ and $\rightarrow^-$.
\end{defn}

We now formulate the program synthesis problem.
\begin{defn}[Program Synthesis problem]
Given:

\begin{itemize}
    \item a set of positive input/output examples,
    \item a set of negative input/output examples,
    \item background knowledge $BK$
\end{itemize}

\noindent
the \emph{Program Synthesis problem} is to find a solution $S \in \mathcal{S}_{BK, \mapsto^+, \mapsto^-}$ that has the minimum number of functions (textually optimal solution).
\noindent
\end{defn}

\section{Invention and reuse}
For this section, suppose $\mathcal{A}$ is an algorithm that solves the \textit{Program Synthesis} problem. First we formalize the concepts of invention and reuse, which we mentioned in chapters 1 and 2.
\begin{defn}[Invention]
We say that $\mathcal{A}$ can invent functions if at any point during the synthesis process it is able to fill a hole with a fresh function name (i.e. does not appear in any of the previous definitions).
\end{defn}

\begin{defn}[Reuse]
We say that $\mathcal{A}$ can reuse functions if at any point during the synthesis process it is able to fill a hole with a function name that has been invented at some point during the synthesis process (be it defined or yet undefined).
\end{defn}

\par As we can see, the two definitions are intertwined: we can not have reuse without invention. The motivation for inventing functions is that this creates modular programs, which naturally support function reuse. As we shall see in the next chapter, one of the main consequences with modularity is its effect on type based pruning techniques. 
\par When function reuse is used, certain problems will benefit from this (such as \textit{droplasts} from chapter 2): we could find solutions closer to the root, which can have noticeable effects on the computation time. However, enabling function reuse means that the BK increases with each newly invented function, and hence the branching factor of the search tree increases dynamically: in the end, function reuse can be seen as a trade-off between the depth and the branching factor of the search tree; this will benefit some sorts of problems, but for others it will just increase the computation time. The concerns we talked in this paragraph are related to the research questions posed in section 1.2, which we will address in the next two chapters.

\newtheorem{theorem}{Theorem}[chapter]
\newtheorem{lemma}{Lemma}[chapter]
\lstset{
	language=Haskell,
	breaklines=true
}
\renewcommand{\algorithmicrequire}{\textbf{Input:}}
\renewcommand{\algorithmicensure}{\textbf{Output:}}
\makeatletter
 \renewcommand{\ALG@name}{Pseudocode}
\makeatother

\chapter{Algorithms for the program synthesis problem}
As previously presented, we want to create an algorithm that is able to create modular programs and hence able to reuse already induced functions. Two of the main aims of such algorithms should be soundness and completeness, which we define next (assume $BK$, $\mapsto^-$ and $\mapsto^+$ are given). 

\begin{defn}[Completeness]
We say an algorithm that solves the program synthesis problem is complete if it is able to synthesize any program in $\mathcal{S}_{BK, \mapsto^+, \mapsto^-}$.
\end{defn}

\begin{defn}[Soundness]
We say an algorithm that solves the program synthesis problem is sound if the complete programs it synthesizes have their target function satisfy $\mapsto^+$ and $\mapsto^+$.
\end{defn}

\par Motivated by \textbf{Q2} from section 1.2, we are interested in another property of such algorithms, namely \emph{type pruning}: we wish to discard undesirable programs based on their types (e.g. when they become untypable). As we shall see, this third property will lead to the creation of two algorithms, but next we present some preliminaries required for understanding them.

\section{Preliminaries}
\subsection{Target language and the type system}
We have chosen the target language to be a $\lambda$-like language that supports contexts, since we don't want to introduce too much added complexity, while still having enough expressivity. Its syntax can be seen in figure 4.1. For simplicity, when we provide code snippets in this language we will adopt a slightly simpler (but equivalent) notation: for example, $val \, f = map \,( \, \textit{Variable} \, g \,)$ will be written as $f = map \, g$. The language supports both recursive and non-recursive definitions for the background knowledge. It also has a number primitives such as $+$, $==$, \textit{nil} or \textit{(:)} (the last two let us work with lists).
\par To support type based pruning, our language will be fully typed, the typing rules being shown in figure 4.2 (for brevity we have omitted the typing rules for primitives, apart for \textit{nil} and \textit{(:)}). We define some standard type theoretic terms we will use later:
\begin{itemize}
\item typing environment: Usually denoted using $\Gamma$, it represents a map that associates a type (quantified or not) to a name or a type variable.
\item substitution: This represent a way to replace existing type variables in an unquantified type with other types; those can also be applied to typing environments (i.e. mapping the application of the substitution over all the types in the environment).
\item type inference: The process of inferring the type of an expression, given some typing environment.
\item free variable: The free variables of a type or environment are those typing variables which are not bound; we use the notation $\mathit{ftv}(...)$ to denote the set of free variables.
\item generalizing: We can generalize over a type by closing over the free variables of that type.
\item instantiating: We can instantiate a quantified type by replacing all the bound variables in it with fresh type variables (and hence make it unquantified).
\item unification: Given two types, unification is the process that yields a substitution that when applied to the types makes them equal; we will use the function $\mathit{unify}$ to denote this process (which can fail; when we write $\mathit{unify}(\tau_1, \tau_2)$ in a condition, we implicitly mean "if they can unify").
\end{itemize}

\begin{figure}
\begin{framed}
\begin{grammar}
<decl> ::= `val' <ident> `=' <expr>  -- \textit{non recursive definition}
\alt `rec' <ident> `=' <expr> -- \textit{recursive definition}
\alt `Pex' <expr> $\Rightarrow$ <expr> -- \textit{a way to specify positive examples}
\alt `Nex' <expr> $\Rightarrow$ <expr>-- \textit{a way to specify negative examples}

<expr> ::= `Num' n
\alt `Char' c
\alt `True'
\alt `False'
\alt `Variable' <ident>
\alt `Lambda' [<ident>] <expr>
\alt <expr> <expr>
\alt `If' <expr> `then' <expr> `else' <expr>'
\alt \fbox{i} -- \textit{we need to represent holes in the syntax}
\end{grammar}
\end{framed}

\caption{BNF Syntax}
\end{figure}

\begin{figure}
\begin{framed}
\begin{equation*}
  \AxiomC{$n \in \mathbb{Z}$}
  \RightLabel{(TNum)}
  \UnaryInfC{$\Gamma \vdash Num \, n: Integer$}
  \DisplayProof
\end{equation*}
\\
\begin{equation*}
  \AxiomC{$c \in \{'a', ... 'z', 'A' ... 'Z', '0', ..., '9'\}$}
  \RightLabel{(TChar)}
  \UnaryInfC{$\Gamma \vdash Char \, c: Character$}
  \DisplayProof
\end{equation*}
\\
\begin{equation*}
  \AxiomC{-}
  \RightLabel{(TTrue)}
  \UnaryInfC{$\Gamma \vdash True: Boolean$}
  \DisplayProof
  \qquad
   \AxiomC{-}
  \RightLabel{(TFalse)}
  \UnaryInfC{$\Gamma \vdash False: Boolean$}
  \DisplayProof
\end{equation*}
\\
\begin{equation*}
  \AxiomC{$x:\tau \in \Gamma$}
  \RightLabel{(TVar)}
  \UnaryInfC{$\Gamma \vdash $Variable $ x : \tau$}
  \DisplayProof
  \qquad
  \AxiomC{$\Gamma \vdash e_0 : \tau \rightarrow \tau'$}
  \AxiomC{$\Gamma \vdash e_1: \tau$}
  \RightLabel{(TApp)}
  \BinaryInfC{$\Gamma \vdash e_0 \, e_1: \tau'$}
  \DisplayProof
\end{equation*}
\\
\begin{equation*}
  \AxiomC{$\Gamma\vdash e : Boolean$}
  \AxiomC{$\Gamma\vdash e' : \tau$}
  \AxiomC{$\Gamma\vdash e'': \tau$}
  \RightLabel{(TIf)}
  \TrinaryInfC{$\Gamma \vdash$ If $e$ then $e'$ else $e'' : \tau$}
  \DisplayProof
\end{equation*}
\\
\begin{equation*}
  \AxiomC{-}
  \RightLabel{(TNil)}
  \UnaryInfC{$\Gamma \vdash nil : \forall \alpha \, . \, [\alpha]$}
  \DisplayProof
  \qquad
  \AxiomC{$\Gamma \vdash e_1 : \tau$}
  \AxiomC{$\Gamma \vdash e_2 : [\tau]$}
  \RightLabel{(TCons)}
  \BinaryInfC{$\Gamma \vdash (e_1 : e_2) : [\tau]$}
  \DisplayProof
\end{equation*}
\\
\begin{equation*}
  \AxiomC{$\Gamma, x:\tau \vdash e : \tau'$}
  \RightLabel{(TAbs)}
  \UnaryInfC{$\Gamma \vdash \lambda x \,.\, e : \tau \rightarrow \tau'$}
  \DisplayProof
  \qquad
  \AxiomC{$\Gamma\vdash e : \tau$}
  \AxiomC{$\bar{\alpha} \notin ftv(\Gamma)$}
  \RightLabel{(TGen)}
  \BinaryInfC{$\Gamma \vdash  e : \forall \bar{\alpha} \, . \,\tau$}
  \DisplayProof
\end{equation*}
\\
\begin{equation*}
  \AxiomC{$\Gamma\vdash e : \tau$}
  \AxiomC{$\alpha$ is  an instantiation of $\tau$}
  \RightLabel{(TInst)}
  \BinaryInfC{$\Gamma \vdash e : \alpha$}
  \DisplayProof
\end{equation*}
\\
\begin{equation*}
  \AxiomC{ $\alpha \not\in ftv(\Gamma)$ }
  \RightLabel{(THole)}
  \UnaryInfC{$\Gamma \vdash \fbox{i} : \forall \alpha \, . \, \alpha$}
  \DisplayProof
\end{equation*}

\end{framed}
\caption{Typing rules for \textit{expr}}
\end{figure}

\subsection{Combinatorial search}
Most systems that have a generate-and-test or hybrid approach use some form of combinatorial search as a means to find new programs. \textit{Metagol} uses iterative deepening depth first search (IDDFS), \textit{MagicHaskeller} uses BFS etc. It is natural then that we shall also use a form of combinatorial search, more specifically we use IDDFS. We have chosen this approach because we want to synthesize programs that gradually increase in size, so as to ensure that the induced program will be the shortest one in terms of the number of synthesized functions.

\subsection{Relations concerning the program space}
We first formalize what was meant by ``cyclical definitions" in the definition of the program space.

\begin{defn}[Name usage]
We say that a function $f$ \textit{directly uses} another function $f'$ if $f'_{\mathit{name}}$ appears in $f_{\mathit{body}}$. \\
We say that a function $f$ \textit{uses} another function $f'$ if $f$ \textit{directly uses} $f'$ or $f$ \textit{uses} $g$ and $g$ \textit{directly uses} $f'$ (transitive closure of \textit{directly uses}).\\
We say that a program is acyclic if no function in $P$ \textit{uses} itself, or cyclic otherwise.
\end{defn}
\par Intuitively, if a complete program is cyclic, then it might not terminate (although this is not always the case, e.g. programs that include co-recursive functions), so we want to avoid such programs to not introduce non termination.

\begin{defn}[Specialisation step]
Given a function $f$, a typing environment $\Gamma$, the specialisation step is a relation on $\mathit{Expr} \times (\mathit{Expr}, \mathit{TypingEnvironment})$ indexed by a typing environment defined by the following rule:
\begin{itemize}
\item if
\begin{itemize}
\item[1.] a hole $ \fbox{i}$ appears in $f_{body}$;
\item[2.] $ g \in BK_I \, \cup \, BK_F$ and $g$ does not use $f$ OR $g$ is a fresh function we invent and add to $BK_I$, whose type in $\Gamma$ is a fresh type variable (use an existing function or invent);
\item[3.] $\rho = \mathit{unify}(\tau, \Gamma(g_{name}))$, where $\tau$ is the type inferred for $\fbox{i}$ in the type derivation tree for $f_{body}$;
\end{itemize}
then we write $f_{body} \xrightarrow{\Gamma} (f_{body}[\, g_{name} \, / \, \fbox{i}\,], \rho \, \Gamma)$ (note, $\rho \, \Gamma$ means we apply the substitution $\rho$ to $\Gamma$).
\end{itemize}
\end{defn}

\par We briefly give some intuition on why the environment is updated. What we do in step 3 is we try to mimic the \textit{TApp} typing rule. Since the type of the hole we are to fill represents a ``minimum requirement" for what types can fill it (see the \textit{THole} rule), it suffices to make sure the type of the filler function is unifiable with the type of the hole. Now, because of our top-down approach to typing, we keep the types of the invented functions unquantified, since the types of those functions are intertwined. Hence, changes we make in one can lead to changes in other functions' types, so we need to apply the substitution to the whole environment (note that both the background functions and the higher-order combinators are quantified, so this won't affect them).
\par Based on this relation, we give an ordering on pairs of induced programs and their typing environment (we will call those pairs \emph{program states}).

\begin{defn}[Ordering]
We say that a program state $(P, \Gamma)$ is more concrete than another program state $(P', \Gamma')$ if either
\begin{itemize}

\item (\textit{specialize}) $names(P) = names(P')$ and there exist exactly two functions $f \in P$ and $f' \in P'$ such that
	\begin{itemize}
		\item[1.] $f_{name} = f'_{name}$
		\item[2.] $f_{body} \xrightarrow{\Gamma} (f'_{body}, \Gamma')$
	\end{itemize}

\item or (\textit{define}) $P' = P \cup \{f_{name} = T\}$, with
\begin{itemize}
\item[1.] $f$ \textit{used} by another function but not yet defined in $P$
\item[2.] $T \in BK_T$
\item[3.] $\rho = \mathit{unify}(\tau, \Gamma(f_{name}))$ (where $\tau$ is the type that can be inferred for the template $T$) and $\Gamma' = \rho \, \Gamma$.
\end{itemize}
\end{itemize}
We write this as $(P, \Gamma) \preccurlyeq (P', \Gamma')$.
\end{defn}
Here, step 3 in the \textit{define} rule makes sure that the type of the template ``agrees" with all the constraints collected from other previous uses of the to be defined function (and again we need to update the whole environment).

Let $\preccurlyeq^{*}$ be the reflexive, transitive closure of $\preccurlyeq$, defined by:
\begin{itemize}
\item $P = P' \Rightarrow P \preccurlyeq^{*} P'  $
\item $\exists P''. P \preccurlyeq^{*} P" \land P" \preccurlyeq P' \Rightarrow  P \preccurlyeq^{*} P'  $.
\end{itemize}

\subsection{Partitioning the templates}
Before continuing with the algorithm, we will divide the templates in two categories, based on their type signatures.

\begin{defn}[Linear function templates]
We define a \textit{linear function template} to be a template where the types of the functional inputs of the associated higher-order function (the function that is applied in the template) share no type variables.
\end{defn}

\begin{exam}[Linear function templates]
The \textit{map} template is a linear function template. The type of the associated combinator is $\forall ab . (a \rightarrow b) \rightarrow [a] \rightarrow [b]$, so the only input (the function we are mapping) has type $a \rightarrow b$, hence it is trivially a linear function template.
\end{exam}

\begin{defn}[Branching function templates]
We define a \textit{branching function template} to be a template where the types of the functional inputs of the associated higher-order function share type variables.
\end{defn}

\begin{exam}[Branching function templates]
The \textit{composition} template is a branching function template. Its type is $\forall abc . (b \rightarrow c) \rightarrow (a \rightarrow b) \rightarrow a \rightarrow c$, so we have two functional inputs to the template, whose types share a type variable, namely $b$.
\end{exam}

\section{The algorithms}
We will now present two algorithms that solve the program synthesis problem and focus on synthesizing \emph{modular programs} that allow \emph{function reuse} to be employed.
\par The reason we show two such algorithms is because we wish to explore \textbf{Q2} from section 1.2 here:
\begin{itemize}
\item we will show that when only linear templates are considered, there exists a sound and complete algorithm that is capable of effective type pruning.
\item we show how the addition of branching function templates breaks the ``naive" algorithm's completeness, which highlights that ``natural" type pruning techniques do not transpose well when general background knowledge is used (and hence it is not a trivial task).
\end{itemize}

\subsection{Only linear function templates algorithm}
We begin with the algorithm that uses only linear function templates. We assume that the $BK$ for the remainder of the subsection will only contain linear templates. Also assume we are given the relations $\mapsto^+$ and $\mapsto^-$.
\par The core parts of the algorithm are 3 procedures that work together: \textit{programSearch}, \textit{expand} and \textit{check}. \textit{programSearch} does the actual search and represents the entry point for the algorithm; \textit{expand} and \textit{check} are both used in \textit{programSearch} to generate new programs and to check whether a certain program is a solution, respectively.
\par We now describe each one in more detail:
\begin{itemize}
\item \textit{programSearch}: this procedure does a depth bounded search; it takes 3 inputs: a procedure that expands nodes in the search tree (\textit{expand}), a function that checks whether the current node is a solution (\textit{check}) and an initial program state. The initial state is defined to be $(P_{empty}, \Gamma_{empty})$, where $P_{empty}$ is the empty set and $\Gamma_{empty}$ is the typing environment that contains the (\emph{quantified}) types of the higher-order functions used in templates and the background functions. The procedure uses an auxiliary function, \textit{boundedSearch}, which actually does the search by expanding and checking nodes in a similar fashion to DFS, but the search is conducted to a certain depth (cut-off point), which is gradually increased. Pseudocode for this procedure can be seen in \textit{Pseudocode 1}.

\item \textit{expand}: given a program state $(P, \Gamma)$, this procedure implements one use of the $\preccurlyeq$ relation to create a stream of more concrete programs. Note that because of the typing conditions in the rules of $\preccurlyeq$ it is here that we are pruning untypable programs. To make this procedure deterministic, we must specify how the two rules from $\preccurlyeq$ are used to create the stream: if $P$ contains a function that has at least a hole, pick that hole and fill it using the \textit{specialize} rule. If no hole remains, use the \textit{define} rule to define one of the previously invented but undefined functions (the target function is the first invented function). The reason behind this strategy is that it makes sure we are filling the holes as soon as possible, and hence detect untypable programs early. Of course, when applying the \textit{specialize} rule, we will try to fill the hole in all the possible ways (by either using previously invented or background functions OR by using a freshly invented function), to ensure determinism; similarly, when the \textit{define} rule is used, we will try to assign all the possible templates to the function we want to define.

\item \textit{check}: checks whether a given node (a program state) is a solution. First, we make sure that the type of the candidate program's target is compatible (unifiable) with the type of the examples. If it is, we then check whether the program satisfies the examples using an interpreter for the language (using a program environment that contains the definitions of the background functions and the higher-order functions, to which we add the invented functions); pseudocode for this can be found in \textit{Pseudocode 2}.
\end{itemize}

\begin{algorithm}
\caption{progSearch}
\begin{algorithmic}
\Procedure{progSearch}{$expand, check, init$}
\Ensure An induced program
\For{$depth=1$ to $\infty$}
	\State $result \gets boundedSrc(expand, check, init, depth)$
	\If{$result \neq failure$}
    		\State \textbf{return} $result$
    \EndIf
\EndFor
\EndProcedure
\end{algorithmic}
\end{algorithm}

\begin{algorithm}
\caption{check}
\begin{algorithmic}
\Procedure{check}{$progEnv, exam, progState$}
\Ensure True if the program satisfies the examples, false otherwise
\State $target \gets getTargetFunction(progState)$ \Comment{the fn. that must satisfy exam}
\If{$not \, compatible(exam.type, target.type)$}
    \State \textbf{return} $false$
\EndIf

\For {$def \in progState.complete$} \Comment{add complete defn. to env}
	\State $addDef(progEnv, def)$
\EndFor
\For {$pos \in exam.positive$}
	\If{$eval(progEnv, (apply(target.body, pos.in))) \neq pos.out$}
		\State \textbf{return} $false$
	\EndIf
\EndFor
\For {$neg \in exam.negative$}
	\If{$eval(progEnv, (apply(target.body, neg.in))) = neg.out$}
		\State \textbf{return} $false$
	\EndIf
\EndFor
\State \textbf{return} $true$
\EndProcedure
\end{algorithmic}
\end{algorithm}

\par We shall call this algorithm $\mathcal{A}_{linear}$. We will prove $\mathcal{A}_{linear}$ is sound and complete. One of the main difficulties is to show that our approach to typing works correctly even thought we go about it in a top-down manner (as opposed to the usual bottom-up one): there are functions whose types we do not fully know yet; this means that we work with partial information a lot of the time, and by acquiring more and more constraints on the types we reach the final ones. In contrast, a normal type inference algorithm knows the full types of a program's function.

\begin{lemma}
Let $\mathcal{P}_{\preccurlyeq} = \{\, P \, \,| (P_{empty}, \Gamma_{empty}) \preccurlyeq^* (P, \Gamma) \land \Gamma $ is a consistent typing environment for $P  \,\}$. Then we have $\mathcal{P}_{\preccurlyeq} = \mathcal{P}_{BK}$ (where $BK$ only contains linear templates).
\end{lemma}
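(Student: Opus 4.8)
The plan is to prove the two inclusions $\mathcal{P}_{\preccurlyeq} \subseteq \mathcal{P}_{BK}$ and $\mathcal{P}_{BK} \subseteq \mathcal{P}_{\preccurlyeq}$ separately. For the first (``everything the relation builds lies in the program space''), I would induct on the length of the derivation $(P_{empty}, \Gamma_{empty}) \preccurlyeq^{*} (P, \Gamma)$. The base case is immediate: $P_{empty}$ vacuously satisfies the four conditions defining $\mathcal{P}_{BK}$, and $\Gamma_{empty}$ is consistent for it. For the inductive step, assume $(P,\Gamma)\in\mathcal{P}_{BK}$ with $\Gamma$ consistent and $(P,\Gamma)\preccurlyeq(P',\Gamma')$, and case on which rule of $\preccurlyeq$ was used. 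In the \emph{specialize} case a hole is replaced by a name $g$ drawn from $BK_I\cup BK_F$ (or a freshly invented name added to $BK_I$), which preserves condition 2; the side condition ``$g$ does not use $f$'' preserves condition 4 (acyclicity); replacing a hole leaves the body a template-with-holes, or, once the last hole goes, a pure function application, so condition 1 holds; and condition 3 together with consistency of $\Gamma'$ follow from the fact that $\rho=\mathit{unify}(\tau,\Gamma(g_{name}))$ is a most general unifier of the hole's inferred type with $g$'s type, so $\rho\,\Gamma$ still types $P'$. In the \emph{define} case we add $f_{name}=T$ for a raw template $T\in BK_T$; since a template mentions no function names it cannot create a cycle (condition 4) and is a legal body (condition 1), and again the unification side condition gives condition 3 and consistency. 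One small bootstrapping point to state explicitly is the convention that the target function is regarded as ``used'' (by the specification), so that the very first step is a \emph{define} that introduces it.

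For the converse, given $P\in\mathcal{P}_{BK}$ I would \emph{build} a derivation reaching it. Because $P$ is acyclic, its induced functions admit an ordering $f_1,\dots,f_n$ ($f_1$ the target) compatible both with the \emph{uses} relation and with the constraint that a name must be invented before it is defined; such an ordering exists precisely because the dependency graph has no cycles. I would then reconstruct $P$ in $n$ phases: in phase $k$, apply the \emph{define} rule to give $f_k$ the template $T_k$ it actually has in $P$ (with all holes still open), and then apply the \emph{specialize} rule repeatedly to fill the holes of $f_k$ exactly as in $P$ --- reusing already-present names or introducing the fresh names that will become $f_{k+1},\dots$ The obligations are: each step's side conditions are met (the acyclicity condition on $g$ holds because we follow the chosen order; the unification conditions succeed because $P$ is typeable and, crucially, linearity of the templates guarantees the hole types produced by the top-down inference are exactly as general as they should be, so no unification that some valid typing of $P$ permits is ever blocked); the process terminates (each step strictly decreases the number of open holes plus undefined used names); and it yields exactly $P$, together with a consistent $\Gamma$.

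I expect the technical heart of the argument --- and the main obstacle --- to be the type-theoretic bookkeeping around the top-down, partial-information inference: precisely formulating ``$\Gamma$ is a consistent typing environment for $P$'', showing it is preserved by the $\rho\,\Gamma$ updates in both rules, and proving it equivalent to ``$P$ is typeable w.r.t. $\mathcal{T}_{check}$''. The delicate part is that, unlike ordinary bottom-up inference, the types of several induced functions are simultaneously unknown and only get pinned down as more holes are filled, so I would maintain an invariant stating that $\Gamma$ records exactly the most general constraints forced so far and that every ground typing of $P$ is an instance of it. It is exactly this invariant where the linearity hypothesis does its work: with only linear templates, filling one hole of a function imposes no constraint on its sibling holes, so the order in which holes and definitions are processed is irrelevant and the incremental unification never overcommits --- this is what makes $\mathcal{P}_{\preccurlyeq}=\mathcal{P}_{BK}$ hold exactly, and what is expected to fail once branching templates are admitted.
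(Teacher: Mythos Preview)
Your proposal is correct and, for the $\subseteq$ direction, essentially identical to the paper's argument: both induct on the derivation, case-split on \emph{specialize} versus \emph{define}, and check the four $\mathcal{P}_{BK}$ conditions one by one.

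For the $\supseteq$ direction your route differs from the paper's. You build the derivation \emph{forward}: pick a topological order on the induced functions, then in phases \emph{define} each $f_k$ with its template and \emph{specialize} its holes, maintaining an invariant that the running $\Gamma$ records the most general constraints so far (linearity ensuring no overcommitment). The paper instead runs the process \emph{backward}: starting from the given $P$, it repeatedly replaces a function name by a hole (typeability preserved by the \textsc{THole} rule) and deletes all-holes definitions, eventually reaching $P_{empty}$; this sequence of ``undo'' steps is then read in reverse as a $\preccurlyeq$-chain, with linearity invoked to justify that the reverse direction's unification side conditions indeed hold. Your forward construction is more explicit about the typing environments and makes the role of linearity sharper (it is precisely what lets your most-general-constraints invariant survive each \emph{specialize}); the paper's backward trick is slicker in that it sidesteps tracking $\Gamma$ at intermediate stages, but correspondingly leaves the ``undesirable side effects'' remark doing more work. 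Either argument is adequate at the level of rigor the paper adopts.
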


\begin{proof}
We do so by double inclusion.
\begin{itemize}
\item[$\subseteq$]:
We need to show that if $P \in \mathcal{P}_{\preccurlyeq}$ then $P \in \mathcal{P}_{BK}$. This follows from the rules of $\preccurlyeq$, since the programs in $\mathcal{P}_{\preccurlyeq}$:
\begin{itemize}
\item only use functions from $BK_I$ and $BK_F$ to fill holes, and templates from $BK_T$ when assigning templates;
\item are not cyclical (see the second condition in the \textit{specialization step} relation);
\item they are always well typed: this can be proven using a short inductive argument. The empty state is well typed because it contains the empty program. Now, suppose we have reached a program state $(P, \Gamma)$, where $P$ is typable wrt. $\Gamma$ and we have $(P, \Gamma) \preccurlyeq (P', \Gamma')$. If this resulted via the \textit{define} rule, $P'$ must be typable wrt. $\Gamma'$ by how $\Gamma'$ was defined: the to be defined function takes all the constraints that were created from using it in other definitions, and makes sure that the template we apply is compatible with them. Now, if the \textit{specialize} rule was used, we only replace a hole when the function we fill it with has a unifiable type with the type that can be inferred for the hole (we essentially use the \textit{TApp} rule). So $P'$ must be typable.
\end{itemize}
The highlighted points correspond to the conditions a program must satisfy to be in $\mathcal{P}_{BK}$.

\item[$\supseteq$]:  we must prove that if $P \in \mathcal{P}_{BK}$ then $P \in \mathcal{P}_{\preccurlyeq}$. Since $P \in \mathcal{P}_{BK}$, we know it is typable wrt. a typing environment $\Gamma$. Now, observe that if we pick a function name that occurs in $P$, and replace that with a hole, the resulting program must be typable (by the \textit{THole} rule) wrt. a typing environment (1). Also, when the body of a function is a template which has no filled holes, we can delete that definition and still have a typable program (the function's name will still be in the typing environment, but its type becomes more general because we removed a constraint) (2). Those two actions can be seen as the reverse of the \textit{define} and \textit{specialize} rules. Now, consider the following process: pick a complete function; replace each function name with a hole, until we are left with a template that contains only holes; then delete that function; repeat the process again. It is clear that this process will eventually lead to the empty program, since at each step we either delete a function, or get closer to deleting a function. Using (1) and (2), we know that all the intermediary programs will be typable, wrt. some typing environments. Furthermore, each of those programs will be acyclical and will only contain information from $BK_I$ and $BK_F$. This means that our process creates a $\preccurlyeq$-path in reverse, say $(P, \Gamma) \succcurlyeq (P_1, \Gamma_1) \succcurlyeq \dots \succcurlyeq (P_N, \Gamma_N) \succcurlyeq (P_{empty}, \Gamma_{empty})$, for some states $(P_i, \Gamma_i)$, since what we have essentially done is apply the two rules of $\preccurlyeq$ in reverse. We do need to make the following note though: this reverse constructions works when we consider only linear templates, because those guarantee that no undesirable side effects occur when a hole is filled, so we can construct a $\preccurlyeq$-path starting at either end. Hence, we have that $P \in \mathcal{P}_{\preccurlyeq}$.
\end{itemize}
From the above cases, we have the conclusion.
\end{proof}

\begin{theorem}
$\mathcal{A}_{linear}$ is sound and complete when considering only linear templates.
\end{theorem}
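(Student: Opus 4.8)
I would prove the two halves separately. Soundness is essentially a reading of the \textit{check} procedure; completeness reduces, via Lemma~4.1, to a reachability statement about the deterministic \textit{expand} strategy, and that reduction is where the linearity hypothesis does real work.

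\textbf{Soundness.} This is almost immediate. \textit{progSearch} only returns a value produced by \textit{boundedSrc}, which only returns a program state on which \textit{check} returned \textit{true}. I would first observe that \textit{check} can only succeed on a \emph{complete} program state: otherwise the target has a hole and the call to \textit{eval} on \textit{apply}$(target.body, \cdot)$ is not a well-defined evaluation of a closed term. On a complete state, \textit{check} returns \textit{true} exactly when the target $f$ satisfies $f(in)=out$ for every positive example and $f(in)\neq out$ for every negative example, i.e.\ when the program satisfies $\mapsto^+$ and $\mapsto^-$ in the sense of the Satisfiability definition. Since \textit{expand} implements one use of $\preccurlyeq$, every state the search ever visits lies in $\mathcal{P}_{BK}$ (the $\subseteq$ direction of Lemma~4.1), so any returned program is complete, lies in $\mathcal{P}_{BK}$, and has its target satisfy both example sets; hence it is a member of $\mathcal{S}_{BK,\mapsto^+,\mapsto^-}$, which is exactly soundness.

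\textbf{Completeness, reduction step.} Fix an arbitrary $S\in\mathcal{S}_{BK,\mapsto^+,\mapsto^-}$; I must show \textit{progSearch} eventually returns a solution. Because $\mathcal{S}_{BK,\mapsto^+,\mapsto^-}\subset\mathcal{P}_{BK}$, Lemma~4.1 gives a finite chain $(P_{empty},\Gamma_{empty})\preccurlyeq^{*}(S,\Gamma_S)$ of some length $k$. It then suffices to show (a) each call \textit{boundedSrc}$(\cdot,\cdot,\cdot,d)$ terminates, so the outer loop of \textit{progSearch} does reach depth $k$, and (b) at depth $k$ (or a depth bounded by a fixed function of $k$, depending on how a single \textit{expand} step is counted against search depth) \textit{boundedSrc} visits a state of the form $(S,\Gamma)$ and calls \textit{check} on it. Part (a) is routine: \textit{expand} produces a finite stream, since a state has finitely many holes, $BK_I\cup BK_F$ is finite at any state, and ``invent a fresh function'' is a single choice up to renaming; with the depth cut-off this makes \textit{boundedSrc} explore a finite tree. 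Once $(S,\Gamma)$ is visited, \textit{check} returns \textit{true} because $S$ is a solution, and the program is returned; and since the outer loop increases the depth by one at a time, the first program returned has minimal $\preccurlyeq$-distance from $(P_{empty},\Gamma_{empty})$, from which the ``textually optimal'' clause follows once one checks that a solution with fewer functions admits a strictly shorter chain (one \textit{define} step plus a run of \textit{specialize} steps per function).

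\textbf{The crux, and the expected obstacle.} Part (b) is the real content. \textit{expand} is \emph{deterministic}: it always works on one designated hole, or, if none remains, applies \textit{define} to the first undefined invented function, rather than exploring $\preccurlyeq$ in every possible order. The chain supplied by Lemma~4.1, however, may fill holes and introduce definitions in some other order, so I must show that every complete program reachable by \emph{some} $\preccurlyeq^{*}$-chain is reachable by a chain that respects \textit{expand}'s discipline. The plan is to permute an arbitrary chain to $S$ into one that does all \textit{specialize} steps on a given body before the corresponding \textit{define}-children's holes, and the designated hole of each body first, and to argue that the permuted chain is still realisable --- i.e.\ that all the unification side-conditions of the reordered \textit{specialize} steps still hold. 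This is precisely where linearity is used: for a linear template, filling one functional hole leaves the type inferred for a sibling hole unchanged, so the $\mathit{unify}(\tau,\Gamma(g_{name}))$ conditions are insensitive to the order in which sibling holes are filled --- the same observation already invoked in the $\supseteq$ part of Lemma~4.1's proof. I expect this reordering/confluence argument, together with pinning down the exact correspondence between search depth and chain length so that the IDDFS bound is provably hit, to be the main difficulty. A secondary, more bookkeeping obstacle is the top-down typing subtlety flagged in the surrounding text: one should verify that the environment $\Gamma$ accumulated along the \emph{deterministic} chain is still a consistent typing environment for $S$ (so Lemma~4.1 applies to the chain actually built), and that \textit{check}'s \textit{compatible} test on the target's inferred type never rejects a genuine solution --- both of which again rest on linearity keeping the per-hole unifications independent.
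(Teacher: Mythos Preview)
Your approach is correct and mirrors the paper's: Lemma~4.1 for completeness, the \textit{check} procedure for soundness. You are, however, considerably more careful than the paper, whose proof is only a few sentences. What you identify as ``the crux'' --- reconciling \textit{expand}'s deterministic hole-and-define order with an arbitrary $\preccurlyeq$-chain --- the paper dispatches in a single parenthetical: ``it does not matter that \textit{expand} uses the rules in a specific order, since the proof of lemma~4.1 did not consider a specific application order''; in other words, the reverse construction in the $\supseteq$ direction of the lemma is taken to already be adaptable to \textit{expand}'s discipline, so no separate reordering argument is supplied. Your explicit confluence sketch using linearity is a legitimate (and more honest) way to close that gap. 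The paper's proof also does not address termination of \textit{boundedSrc} or the textual-optimality clause, both of which you handle.
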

\begin{proof}
By lemma 4.1 and since \textit{expand} implements the $\preccurlyeq$ relation (and hence synthesizes all programs in $\mathcal{P}_{\preccurlyeq}$), we have that \textit{expand} produces all programs in $\mathcal{P}_{BK}$ (note that it does not matter that \textit{expand} uses the rules in a specific order, since the proof of lemma 4.1 did not consider a specific application order). Since $\mathcal{S}_{BK, \mapsto^+, \mapsto^-} \subset \mathcal{P}_{BK}$ and \textit{check} is used on all programs synthesized by \textit{expand}, we are certain that we will be able to synthesize all functions in $\mathcal{S}_{BK, \mapsto^+, \mapsto^-}$ (1). Furthermore, since a program is the output of the algorithm only if it satisfies the examples (enforced by \textit{check}), this means that the programs we synthesize are indeed solutions (2). From (1) and (2) we have that the algorithm is sound and complete.
\end{proof}

\par Hence, From theorem 4.1 we have that $\mathcal{A}_{linear}$ solves the program synthesis problem.
\par Motivated by question \textbf{Q2} from section 1.2, in this subsection we have focused on creating an algorithm that employs effective and early type based pruning. We have seen that if linear templates are used, the pruning technique is a ``natural" extension of what a normal type inference system would do. While only using linear templates might seem restrictive, we note that templates such as \textit{map}, \textit{filter}, \textit{fold} (with the type of the base element a base type, so that we don't have shared type variables) are all linear templates, and they can express a wide range of problems.
\par We make one last observation here. One can notice that \textit{expand} does not care about the types of the examples being compatible with the type of the incomplete programs' target type (this is done in \textit{check}). To increase the amount of pruning, we can make \textit{expand} discard those programs whose target type does not agree with the type of the examples. We observe that this does not break the completeness of the algorithm, since the programs we wish to synthesize are in the solution space, and hence must have their target's type compatible with the type of the examples.

\subsection{Consequences of adding branching templates}
We now investigate what effect branching templates have on $\mathcal{A}_{linear}$. We first prove the next important result.
\begin{theorem}
$\mathcal{A}_{linear}$ is no longer complete when its input contains branching function templates.
\end{theorem}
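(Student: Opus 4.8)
The plan is to prove incompleteness by exhibiting a concrete solution that $\mathcal{A}_{linear}$ provably never enumerates. The starting point is the same bookkeeping fact used in the proof of theorem 4.1: \textit{expand} realises exactly one step of the $\preccurlyeq$ relation, so every program $\mathcal{A}_{linear}$ ever considers lies in $\mathcal{P}_{\preccurlyeq}$, and the $\subseteq$ half of lemma 4.1 — whose argument never used linearity — still gives $\mathcal{P}_{\preccurlyeq}\subseteq\mathcal{P}_{BK}$ in general. Hence it suffices to pick background knowledge with at least one branching template, examples $\mapsto^+,\mapsto^-$, and a program $P^{\star}\in\mathcal{S}_{BK,\mapsto^+,\mapsto^-}$ with $P^{\star}\notin\mathcal{P}_{\preccurlyeq}$: such a $P^{\star}$ is then a solution $\mathcal{A}_{linear}$ cannot synthesise.

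For the witness I would take the \textit{droplasts}-with-reuse program of Example 2.2, with $BK_F=\{\textit{reverse},\textit{tail}\}$ and $BK_T=\{\,\fbox{1}\,,\ \fbox{1}\,.\,\fbox{2}\,,\ \textbf{map}\ \fbox{1}\,\}$; note that composition is a \emph{branching} template. First one checks $P^{\star}\in\mathcal{P}_{BK}$: it is acyclic, uses only $BK_F/BK_I$ in its holes and $BK_T$ for its templates, and it is typeable by $\mathcal{T}_{check}$. The crucial point here is that when the declaration $f_1=\textit{reverse}\,.\,f_3$ is checked, rule (TGen) generalises $f_1$ to $\forall z.\,[z]\to[z]$, so its two occurrences — inside $\textbf{map}\ f_1$ in the body of $f_2$, and inside $f_1\,.\,f_2$ in the body of \textit{target} — can be instantiated at \emph{different} types, and the whole program type-checks (with \textit{target} of type $\forall d.\,[[d]]\to[[d]]$ and computing exactly \textit{droplasts}). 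Taking $\mapsto^+$ to be a single genuine \textit{droplasts} input/output pair and $\mapsto^-=\varnothing$ puts $P^{\star}$ into $\mathcal{S}_{BK,\mapsto^+,\mapsto^-}$.

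The heart of the argument is $P^{\star}\notin\mathcal{P}_{\preccurlyeq}$. Every rule of $\preccurlyeq$ only ever \emph{unifies} types into the environment and introduces invented functions with an \emph{unquantified} type variable; none of them generalises. So in any state $(P',\Gamma')$ with $(P_{empty},\Gamma_{empty})\preccurlyeq^{*}(P',\Gamma')$, the environment $\Gamma'$ assigns the invented function $f_1$ a single monotype, and that monotype must be simultaneously consistent with both syntactic occurrences of $f_1$ in $P^{\star}$. From \textit{target}$=f_1\,.\,f_2$ the composition/\,(TApp) typing forces the domain of $f_1$ to equal the codomain of $f_2$, while $f_2=\textbf{map}\ f_1$ forces $f_2:[\sigma]\to[\tau]$ with $f_1:\sigma\to\tau$; combining these gives $\sigma=[\tau]$, i.e. $f_1:[\tau]\to\tau$. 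On the other hand the definition $f_1=\textit{reverse}\,.\,f_3$ with $f_3=\textit{tail}\,.\,\textit{reverse}$ forces $f_1$ to be a list endomorphism $[\tau]\to[\tau]$. No monotype is both: unifying $[\tau]\to\tau$ with $[\tau]\to[\tau]$ demands $\tau=[\tau]$, which fails the occurs check. Hence no consistent $\Gamma'$ exists, $P^{\star}$ is unreachable by $\preccurlyeq^{*}$, and $\mathcal{A}_{linear}$ misses a solution, so it is incomplete once branching templates are admitted.

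I expect the main obstacle to be making that third paragraph watertight — arguing that the occurs-check obstruction is forced for \emph{every} $\preccurlyeq$-derivation leading to $P^{\star}$, not merely for the particular order in which \textit{expand} happens to fill holes and define functions. The clean way, mirroring the $\supseteq$ direction of lemma 4.1, is to note that any derivation of $P^{\star}$ passes only through states obtained from $P^{\star}$ by turning some function names back into holes and deleting some holes-only template functions; for each such partial program the accompanying environment is determined (up to renaming) by running $\mathcal{T}_{check}$ on it \emph{without} generalising invented functions, and one then verifies that the clashing constraints $f_1:[\tau]\to\tau$ and $f_1:[\tau]\to[\tau]$ are already both present in whichever of these states first contains the body of \textit{target} specialised with $f_1$ together with the body of $f_2$ specialised with $f_1$. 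A secondary point worth stating explicitly is that $P^{\star}\in\mathcal{P}_{BK}$ relies on $\mathcal{T}_{check}$ performing let-style generalisation on \texttt{val}/\texttt{rec} declarations (rule (TGen)); it is exactly this asymmetry — $\mathcal{T}_{check}$ generalises invented functions while the type propagation baked into $\preccurlyeq$ does not — that the theorem exploits, and branching templates are what make that asymmetry observable.
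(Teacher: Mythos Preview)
Your proposal is correct and exploits exactly the same phenomenon as the paper --- an invented function that is \emph{reused} at two different polymorphic instances, which $\mathcal{T}_{check}$ accepts via (TGen) but the $\preccurlyeq$ relation rejects because it never generalises invented names --- and both witnesses share the crucial shape $target = X\,.\,Y$, $Y = \textbf{map}\ X$. The differences are worth noting. First, the paper chooses a strictly simpler counterexample: only $\textit{reverse}$ in $BK_F$, and the three-line program $target=gen1\,.\,gen2$, $gen2=\textbf{map}\ gen1$, $gen1=\textit{reverse}$ (via the identity template). Your \textit{droplasts} witness is four functions and two background primitives; it works, but the extra layer $f_1=\textit{reverse}\,.\,f_3$, $f_3=\textit{tail}\,.\,\textit{reverse}$ adds bookkeeping without adding to the argument. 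Second, the paper argues \emph{operationally}: it simply walks through the deterministic hole-filling/defining order that \textit{expand} uses and shows that by the time $gen1$ is to be bound to $\textit{reverse}$ its monotype has already been squeezed to $[e]\!\to\!e$, so the final unification fails. You instead argue \emph{semantically}, showing that no single monotype for $f_1$ can satisfy the accumulated constraints at the completed state. Your route is more robust (it does not depend on \textit{expand}'s particular strategy), but, as you yourself flag, it requires the extra work of arguing that the obstruction arises along every $\preccurlyeq$-path; the paper sidesteps that by leaning on \textit{expand}'s fixed order. Your explicit observation that the theorem really hinges on the asymmetry between (TGen) in $\mathcal{T}_{check}$ and the lack of generalisation in $\preccurlyeq$ --- with branching templates being what makes that asymmetry visible --- is a nice diagnosis that the paper leaves implicit.
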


\begin{proof}
We will show that $\mathcal{A}_{linear}$ is unable to synthesize the following problem: given a list of lists, reverse all inner lists as well as the outer list. Suppose that the available function templates are \textit{map}, \textit{composition} and \textit{identity} and that the only background function is \textit{reverse}. Given those, we should be able to infer the following program: \\
\indent $target = gen1 \, . \, gen2$ \\
\indent $gen2 = map \, gen1$ \\
\indent $gen1 = reverse$ \\
Since we have fixed the order in which we define functions and fill the holes (see the description of \textit{expand}), the following derivation sequence will happen on the path to finding the solution.
\begin{itemize}
\item[1.] $target = \fbox{1} \, . \, \fbox{2} $ (we have no holes to fill, define the \textit{target} function)
\item[2-3.] $target = gen1 \, . \, gen2$ (before inventing any other function fill the 2 existing holes)
\item[4.] $target = gen1 \, . \, gen2$, $gen2 = map \, \fbox{3}$ (all holes filled, define $gen2$)
\item[5.] $target = gen1 \, . \, gen2$, $gen2 = map \, gen1$ (before inventing any other function fill the existing hole)
\item[] ...
\end{itemize}
After step 1, because $(.)$ is applied to the two holes, we infer that the types of $\fbox{1} $ and $ \fbox{2}$ must be of the form $b \rightarrow c$ and $a \rightarrow b$, respectively. In steps 2-3, since we invent the functions \textit{gen1} and \textit{gen2}, their types will be $b \rightarrow c$ and $a \rightarrow b$ (since they don't have a definition, and are also not used elsewhere, they just take the types of the holes they filled). In step 4, the type of \textit{gen2} will change: we must then unify \textit{gen2}'s type with that of the \textit{map} template ($[d] \rightarrow [e]$). After this unification takes place, \textit{gen2}'s type will be $[d] \rightarrow [e]$, the type that can be inferred for $\fbox{3}$ is $d \rightarrow e$, but $gen1$'s type will also change (since it shares a type variable with $gen2$'s type) to $[e] \rightarrow c$. Now, in step 5, we need to unify \textit{gen1}'s type with the type of $\fbox{3}$, and we will get $[e] \rightarrow e$ for \textit{gen1}. Now, since we know that the type of \textit{reverse} is of the form $[f] \rightarrow [f]$, it is clear that the definition $gen1 = reverse$ is impossible, since \textit{gen1}'s type is incompatible with \textit{reverse}'s type. We conclude the program we considered can not be synthesized and hence the algorithm is no longer complete.
\end{proof}

\par The problem here is caused by combining general polymorphism with branching templates. The fact that we are trying to type programs progressively, in top-down manner, creates the possibility of constraining types too early. An example can be seen in the previous proof: \textit{gen1} and \textit{gen2} initially shared a type variable, because we eagerly adjusted their types to fit with the type of the composition template; this, in turn, lead to the type of \textit{gen1} being ultimately incompatible with the type of \textit{reverse}. Normal type inference can deal with this problem because it fully knows the types of all functions (whereas we are progressively approaching those final types).

\par Our attempt to solving this involved transforming the branching templates into linear templates and generating unification constraints on relevant type variables to correctly deal with polymorphism. For example, take the composition template, whose ``branching" type is $(b \rightarrow c) \rightarrow (a \rightarrow b) \rightarrow (a \rightarrow c)$; the new ``linear" type would be $(b_1 \rightarrow c) \rightarrow (a \rightarrow b_0) \rightarrow (a \rightarrow c)$, with the constraint that $b_0$ and $b_1$ are unifiable. The idea here was that this gives us some ``wiggle room" when deciding the types (and hence potentially fixes the problem introduced by polymorphism). However, we were unable to use this approach to develop an algorithm we were certain was sound and complete. After this attempt, we managed to find two papers that talk about typing a lambda like language with first class contexts, by Hashimoto and Ohori \cite{hashimoto} and by Hashimoto \cite{hashimoto1}. The former paper provides a theoretical basis by creating a typed calculus for contexts, while the latter develops an ML-like programming language that supports contexts and furthermore provides a sound and complete inference system for it. This suggests that there might be a way to have meaningful type pruning when branching templates are allowed, but we will reserve exploring this avenue for future work.

\subsection{An algorithm for branching templates}
\par We now propose an algorithm that is complete and sound when considering branching templates, which we will call $\mathcal{A}_{branching}$. Motivated by the observations shown in the last subsection, we will completely disregard early type pruning for the purposes of this algorithm. $\mathcal{A}_{branching}$ is similar in structure to $\mathcal{A}_{linear}$, but it defers type checking until after we synthesize a complete program (no early type pruning):
\begin{itemize}
\item \textit{progSearch} will remain the same.
\item \textit{expand} will follow the same filling and defining strategy as before and make sure the synthesized programs are acyclical, but will now completely disregard anything type related.
\item  \textit{check} must now have an additional step, checking whether the program is indeed typable (using a ``normal" inference algorithm based on the typing rules of the language) before checking for example satisfiability.
\end{itemize}

It is easy to see that this algorithm is sound (because of the extra check in \textit{check}) and complete (\textit{expand} produces every possible program that is not cyclical, which is clearly a superset of the solution space), and hence solves the program synthesis problem. For the experiments involving reuse in the next chapter we shall use an implementation of this algorithm, because, as we will see in the next chapter, branching templates are important for effective function reuse.
\chapter{Experiments and results}
\indent \indent In this chapter, through experimentation, we will attempt to answer questions \textbf{Q1}, \textbf{Q3} and \textbf{Q4} from section 1.2, which we reiterate:
\begin{itemize}
\item[\textbf{Q1}] Can function reuse improve learning performance (find programs faster)?
\item[\textbf{Q3}] What impact does the grammar of the synthesized programs have on function reuse?
\item[\textbf{Q4}] What classes of problems benefit from it?
\end{itemize}
\par The implementation we shall use was written in Haskell and closely follows $\mathcal{A}_{branching}$, whose outline can be found in Appendix A. We focus on this algorithm's implementation because, as we will see in section 5.2, using only linear templates makes it almost impossible to reuse functions.

\section{Experiments}
\par \par We begin by showcasing the experiments we conducted in order to answer \textbf{Q1}. For simplicity (and since they have enough expressive power for our purpose), we will use three templates: \textit{map}, \textit{filter} and \textit{composition}. The results of the experiments are summarised in tables 5.1 (output programs) and 5.2 (learning times); the times shown are of the form [mean over 5 repetitions] $\pm$ standard error.

\subsection{addN}
We begin with a simple yet interesting example.
\begin{itemize}
\item[] \textbf{Problem}: Given a number, add N to it.
\item[] \textbf{Method}: We will be considering this problem for $N \in \{4, 5, \dots, 10\}$. The only background function we use is \textit{add1}. Example wise, we will use 2 positive examples of the form $x \rightarrow^+ x + N$ and 2 negative examples of the form $x \rightarrow^- x + M$, with $M \neq N$.
\item[] \textbf{Results}: Figure 5.1 plots the mean of the learning times for different values of $N$ (5 repetitions). As we can see, function reuse is vital here: by creating a function that adds two, we can reuse it to create a function that adds 4, and so on; this means that there is a logarithmic improvement in program size from the no reuse variant, as can be seen in table 5.1 (which shows the solution for \textit{add8}), which in turn leads to an increase in performance, as can be seen in table 5.2. Something to note is that for $N=16$, if reuse is used, the solution is found in under a second, whereas if reuse is not used no solution is found even after 10 minutes. The result here suggest that the answer to question \textbf{Q1} is yes.
\end{itemize}

\begin{figure}
\begin{center}
\begin{tikzpicture}
\begin{axis}[xlabel={N}, ylabel={learning time (sec)}, legend style={at={(0.03,0.5)},anchor=south west}]
    \addplot+[error bars/.cd,y dir=both,y explicit]
    coordinates { (4,0.006636)+-(0.0003756, 0.0003756) (5, 0.010378)+-(0.0003283, 0.0003283) (6,0.014052)+-(0.00089, 0.00089) (7, 0.045016)+-(0.0004639, 0.0004639) (8,0.013348)+-(0.0005713, 0.0005713) (9, 0.046728)+-(0.0015674, 0.0015674) (10,0.048108)+-(0.0004255, 0.0004255) };
    \addplot+[error bars/.cd,y dir=both,y explicit]
    coordinates { (4,0.009872)+-(0.000708, 0.000708) (5, 0.021986)+-(0.0046309, 0.0046309) (6,0.046578)+-(0.0016173, 0.0016173) (7, 0.211948)+-(0.0025156, 0.0025156) (8,1.186)+-(0.0147, 0.0147) (9, 6.684)+-(0.1342, 0.1342) (10,43.246)+-(1.4449, 1.4449) };
\addlegendentry{Reuse}
\addlegendentry{No reuse}
\end{axis}
\end{tikzpicture}
\end{center}
\caption{Learning times for \textit{addN}}
\end{figure}
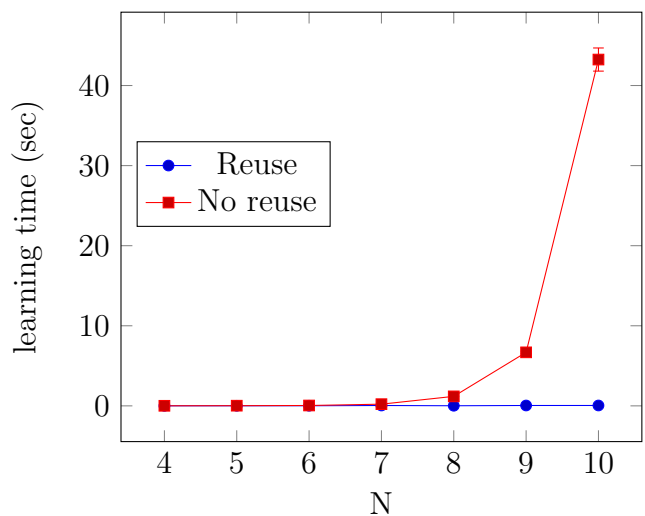

\subsection{filterUpNum}
\begin{itemize}
\item[] \textbf{Problem}: Given a list of characters, remove all upper case and numeric elements from it.
\item[] \textbf{Method}: We use the following background functions:  \textit{isUpper}, \textit{isAlpha}, \textit{isNum}, \textit{not} and will use 2 positive and 2 negative examples.
\item[] \textbf{Results}:  As can be seen in the table 5.1, this is a problem where only function invention suffices, and reuse shows no improvement wrt. program size. However, this is a good example that shows that, for programs which have a reasonably small number of functions, reuse does not introduce too much computational overhead: in our case it doubles the execution time, but this is not noticeable since both execution times are under half a second.
\end{itemize}

\subsection{addRevFilter}
\begin{itemize}
\item[] \textbf{Problem}: Given a list of integers, add 4 to all elements, filter out the resulting even elements and reverse the new list.
\item[] \textbf{Method}: The background functions used are: \textit{add1}, \textit{add2}, \textit{isOdd}, \textit{reverse}; we use 2 positive examples and 2 negative examples.
\item[] \textbf{Results}:  Again, in this case function reuse does not lead to a shorter solution. However, in this case there is a noticeable increase in the execution time when reuse is used, from around 2 seconds to around 9 seconds. Another interesting observation here is that, while both programs in table 5.1 are the smallest such programs (given our BK), the one that employs reuse is actually less efficient than the one that only uses invention: one maps \textit{ add2} twice over the list, whereas the other one creates a function that adds 4 and then maps that over the list once. The result here, together with the result in \textit{filterUpNum} suggest the following about \textbf{Q1}: while function reuse could be very helpful in some situations, sometimes it will not help in finding a shorter solution; furthermore, while in some cases the computational overhead is not too noticeable, in others the overhead will be quite sizeable.
\end{itemize}

\subsection{maze}
\begin{itemize}
\item[] \textbf{Problem}: Given a maze that can have blocked cells, a robot must find its way from the start coordinate to the end coordinate.
\item[] \textbf{Method}: The background functions used represent the basic movements of the robot: \textit{mRight}, \textit{mLeft}, \textit{mDown}, \textit{mUp} (if the robot tries to move out of the maze, ignore that move). The mazes we will consider will be 4x4, 6x6 and 8x8; the start will always be cell (0, 0) and the goals (3, 3), (5, 5) and (7, 7), respectively. We will use one positive example and no negative examples (no need for them in such a problem).
\item[] \textbf{Results}: Reuse has a dramatic effect on the learning times, as can be seen in table 5.2 (for the 4x4 problem). Interesting here are the 6x6 and 8x8 variants, since when enabling reuse we managed to find solutions for both in under 10 seconds, but when reuse was not employed, the system was not able to produce results even after 10 minutes. The result here enforces our previous assertion about question \textbf{Q1}: when reuse is applicable, it can make a big difference.
\end{itemize}

\subsection{droplasts}
\begin{itemize}
\item[] \textbf{Problem}: Given a list of lists, remove the last element of the outer list as well as the last elements of the inner lists.
\item[] \textbf{Method}: The background functions we use are \textit{reverse}, \textit{tail}, \textit{addOne}, \textit{addTwo}, \textit{isOdd}, \textit{id} (the latter 4 functions are noise, to put stress on the system).
\item[] \textbf{Results}: From the formulation, we can get a sense that \textit{tail} combined with \textit{reverse} will represent the building block for the solution, since intuitively this operation would need to be performed on both the outer list as well as on the inner lists. Indeed, the solution that uses function reuse is both shorter and it is found much faster than the no reuse variant. As we can see, reuse has had a major impact here, drastically reducing the computation time (as can be seen in table 5.2). 
\par Curious about how the system (using reuse) would behave when varying the number of background functions, we have conducted a few more experiments to test this. To make it challenging, we have only retained \textit{reverse} and \textit{tail}, and all the functions we added were the identity functions (with different names), so even if type based pruning would be used, it would not really make a difference. The results can be seen in figure 5.2 (we plotted the means of 3 executions $\pm$ the standard error). The results here enforce our previous assertion about question \textbf{Q1}, solidifying our belief that reuse is indeed useful, while also showing that the system behaves respectably when increasing the number of background functions.
\end{itemize}

\begin{figure}
\begin{center}
\begin{tikzpicture}
\begin{axis}[xlabel={\# of BK functions used}, ylabel={learning time (sec)}]
    \addplot+[error bars/.cd,y dir=both,y explicit]
    coordinates { (2,0.1492433) +- (0.0019439, 0.0019439) (3,0.4220267) +- (0.0070662, 0.0070662) (4,1.04)+-(0.0153, 0.0153) (5, 1.9967) +- (0.012, 0.012) (6,3.5467)+-(0.0186, 0.0186) (7,6.02)+- (0.0173, 0.0173) (8,9.8367)+-(0.2382, 0.2382) (9,15.8533)+-(0.98, 0.98) (10,22.4633)+-(0.9607, 0.9607) };
\end{axis}
\end{tikzpicture}
\end{center}
\caption{Learning times for \textit{droplasts} (with reuse)}
\end{figure}
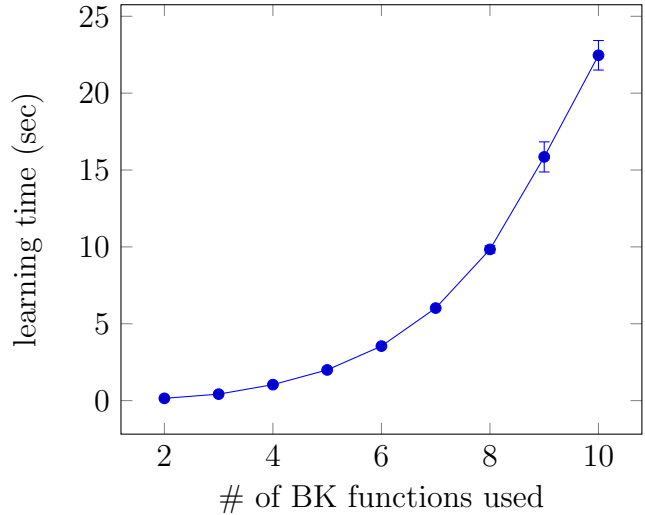

\begin{table}[h!]
\begin{center}
\begin{tabular}{|c|c|c|}
\hline
Problem & Reuse + Invention & Only invention \\

\hline
add8 &
\begin{tabular}{@{}c@{}}  g3 = add1.add1 \\ gen2 = g3.g3 \\ target = g2.g2   \end{tabular} &
 \begin{tabular}{@{}c@{}}
g3 = add1.add1 \\
g5 = add1.add1 \\
g7 = add1.add1 \\
g6 = add1.add1 \\
g4 = g6.g7 \\
g2 = g4.g5 \\
target = g2.g3   \end{tabular} \\
\hline

filterUpNum &
\begin{tabular}{@{}c@{}}
g3 = filter isAlpha \\
g4 = not.isUpper \\
g2 = filter g4 \\
target = g2.g3 \\
\end{tabular} &
same as R + I \\
\hline

addRevFilter &
\begin{tabular}{@{}c@{}}
g5 = g4.reverse \\
g4 = map add2 \\
g3 = g4.g5 \\
g2 = filter isOdd \\
target = g2.g3 \\
\end{tabular} &
\begin{tabular}{@{}c@{}}
g5 = add2.add2 \\
g4 = map g5 \\
g3 = g4.reverse \\
g2 = filter isOdd \\
target = g2.g3 \\
\end{tabular} \\

\hline
maze(4x4) &
\begin{tabular}{@{}c@{}}
g3 = mRight.mUp \\
g2 = g3.g3 \\
target = g2.g3 \\
\end{tabular} &
\begin{tabular}{@{}c@{}}
g3 = mRight.mUp \\
g5 = mRight.mRight \\
g4 = mUp.mUp \\
g2 = g4.g5 \\
target = g2.g3
\end{tabular} \\

\hline
dropLasts &
\begin{tabular}{@{}c@{}}
g4 = reverse.tail \\
g3 = g4.reverse \\
g2 = map g3 \\
target = g2.g3 \\
\end{tabular} &
\begin{tabular}{@{}c@{}}
g6 = reverse.tail \\
g3 = g6.reverse \\
g5 = reverse.tail \\
g4 = g5 reverse \\
g2 = map g4 \\
target = g2.g3 \\
\end{tabular} \\

\hline
\end{tabular}
\end{center}
\caption{Programs output for the experimental problems}
\end{table}

\begin{table}[h!]
\begin{center}
\begin{tabular}{ |c|c|c| }
\hline
Problem & Reuse + Invention & Only invention \\

\hline
add8 &
13.34 ms $\pm$ 0.57&
1.18 sec $\pm$ 0.01\\
\hline

filterUpNum &
338.29 ms $\pm$ 11.16 &
153.90 ms $\pm$ 4.57 \\
\hline

addRevFilter &
9.11 sec $\pm$ 0.06 &
1.97 sec $\pm$ 0.01 \\

\hline
maze(4x4) &
67.50 ms $\pm$ 4.85&
5.37 sec $\pm$ 0.05\\

\hline
dropLasts &
1.84 sec $\pm$ 0.01  &
252.24 sec $\pm$ 7.16 \\

\hline
\end{tabular}
\end{center}
\caption{Learning times for the experimental problems}
\end{table}

\section{On function reuse and function templates}
\indent \indent After attempting to answer \textbf{Q1} in the previous subsections, we now consider \textbf{Q3} and \textbf{Q4}. As previously discussed, function reuse does not come without a cost: in some cases it negatively affects the execution time. Furthermore, it is clear that not all programs take advantage of function reuse, so an answer to question \textbf{Q4} is quite important.

\par We have been able to distinguish two classes of problems that benefit from function reuse: problems that involve repetitive tasks (especially planning in the AI context) and problems that involve operations on nested structures. For the former class, we can give the \textit{maze} problem as an example. In that case, function reuse lead to the creation of a function that was equivalent to \textit{moveRight} then \textit{moveUp}, which helped reach shorter solutions because the robot used this combination of moves  frequently. For the latter class, \textit{droplasts} is a perfect illustration. The solution acts on both the inner and outer lists, which is a good indication that the operation might be repetitive and hence benefit from the reuse of functions. We note that both classes of programs we presented contain quite a lot of programs (lots of AI planning tasks can be encoded in our language and tasks that act on nested structures are common), which is a good indication that reuse is applicable and can make a difference in practical applications.

\par Another interesting point (raised by \textbf{Q3}) is how the presence of various function templates (which induce the grammar of our programs) affects function reuse, and whether the partition we have presented in the previous chapter plays a part in this. If we think about the graph the \textit{uses} relation induces on the invented functions (call it a functional dependency graph, or FDG), the programs our algorithms synthesize have acyclic FDGs, because we never introduce cyclical definitions (see definition 4.2). Now, the fact that the types of the holes do not share type variables for linear templates suggests that in practice the majority of those templates are actually likely to have one single hole. If this is the case, this means they create linear FDGs, which make function reuse impossible (otherwise, the FGD would be cyclic). Indeed, we can make the following remark: to enhance function reuse, branching templates should always be used. In particular, \textit{composition} and other similar branching templates that encapsulate the idea of chaining computations are very effective: they create branches in the FDG, and a function invented on such a branch could be reused on another one.

\chapter{Conclusions}
\section{Summary}

\indent \indent This project was motivated by the fact that, to the best of our knowledge, invention and in particular function reuse has not been properly researched in the context of inductive functional programming.
\par In chapter 3, we formalized the program synthesis problem, which provided us with the language necessary to create algorithms. Chapter 4 represents an important part of the project, since that is where we have presented two approaches to solving the program synthesis problem, namely $\mathcal{A}_{linear}$ (which works with a specific type of background knowledge) and $\mathcal{A}_{branching}$ (which works on general background knowledge). An interesting result we found was that the form of type pruning $\mathcal{A}_{linear}$ uses, which relies on a normal type inference process (extended in a natural way to work with contexts), breaks its completeness if general background knowledge is used: hence, we observed that type pruning is not a trivial task when synthesizing modular programs. In chapter 5 we have relied on the implementations of $\mathcal{A}_{branching}$ to show a variety of situations where function reuse is important: examples such as \textit{droplasts}, \textit{add8} and \textit{maze} showed how crucial it can be. We have also distinguished two broad classes of programs that will generally benefit from function reuse and discussed the impact the used background knowledge has on reuse.
\par Overall, we have seen that there is value in exploring the ideas of modular programs and function reuse, and believe that this project can serve as the base for future work in this direction.
\section{Reflections, limitations and future work}
\indent \indent The project is limited in a few ways and can be further enhanced. One major limitation is the lack of any form of pruning for $\mathcal{A}_{branching}$ (the algorithm that works with linear templates benefits from type based pruning). As we have stated in chapter 4, a possible way to overcome the problems of typing with contexts could be solved by attempting to create a type system and type inference system similar to the ones described in \cite{hashimoto} and \cite{hashimoto1}. Furthermore, a possible extension of the project could examine the benefits of pruning programs through example propagation, in a similar way to how $\lambda^2$ does it \cite{lambdasq}. An interesting point to explore is whether branching templates would hinder this approach to pruning in any way (more specifically, whether templates such as \textit{composition} would prevent any such pruning to be done before the program is complete). Another avenue to explore would be to see whether there are other major classes of programs that benefit from function reuse, specifically problems related to AI and game playing.

\appendix
\chapter{Implementation details}

We briefly give some implementation details for the algorithm $\mathcal{A}_{branching}$. The implementation can be found at https://github.com/reusefunctional/reusefunctional, which also contains details on how to run the system.

\section{Language implementation}
The language that we have presented in chapter 4 is very similar to the language \textit{Fun}, presented in the Oxford PoPL course \cite{popl}. Hence, we have used the parser and lexer, together with parts of the interpreter for our implementation, but those have been extended in multiple ways. We have added types to the language and added a type inference system (which can be found in the files \textit{Types.hs} and \textit{Infer.hs}). The inference system follows classical algorithms, and a similar implementation by Stephen Diehl can be found in \cite{diehl}, which we have used as a guide. To support the synthesis process inside the language, we have added three constructs to the language presented in chapter 4.
\par Listing A.1 shows the test file used for the \textit{add} problem mentioned in chapter 5, which highlights most features of the language. Note that the first three functions represent the implementation of the higher order functions used in the templates (this is part of our idea that templates should be easy to modify) and the fourth is a background function (hence the \textit{BK_} prefix).

\begin{lstlisting}[frame=single, caption=add8 test file]
val comp(f, g) = lambda (x) f(g(x)) ;;

rec map(f) = lambda (xs)
    (if xs = nil
    then nil
    else f(head(xs)):map(f)(tail(xs))) ;;

rec filter(p) = lambda (xs)
    (if xs = nil
    then nil
    else 
        if (p(head(xs)))
        then head(xs):filter(p)(tail(xs))
        else filter(p)(tail(xs))) ;;

val BK_addOne(x) = x + 1 ;;

NEx (1) => 2 ;;
NEx (3) => 5 ;;
PEx (1) => 9 ;;
PEx (7) => 15 ;;
Synthesize (Int) => Int;;
\end{lstlisting}

\section{Algorithms implementation}
The implementation of the searching algorithm closely follows the algorithm described in chapter 4 and can be found in the file \textit{Search.hs}. The \textit{uses} relation and the cycle check is done by creating a dependency graph and checking for cycles when adding edges during the creation of declarations (\textit{DepGraph}). The \textit{check} function has been implemented with the help of an interpreter for our language (\textit{Interpreter.hs}). To have the induced functions in the right order when defining them for the purpose of checking the examples, we use the topological ordering of the dependency graph's nodes (which denote the functions) to make sure a function can only be defined once all the functions that appear in its body are also defined.

\addcontentsline{toc}{chapter}{Bibliography}
\renewcommand{\bibname}{References}
\bibliography{Bibliography} 
\bibliographystyle{plain}  

\end{document}